\newcommand{\keywords}{\textbf{Keywords.}~}
\newcommand{\algorithmConfig}{
  \DontPrintSemicolon
	\SetKw{Or}{or}
	\SetKwInOut{Input}{Input}
	\SetKwInOut{Output}{Output}
}
\newcommand{\removelatexerror}{\let\@latex@error\@gobble}
\theoremstyle{definition}
\newtheorem{definition}{Definition}
\newtheorem{remark}{Remark}
\theoremstyle{theorem}
\newtheorem{proposition}{Proposition}
\newcommand{\finja}{\textsf{finja}\xspace}
\newcommand{\Z}{\mathbb{Z}}
\newcommand{\ie}{\textit{i.e.}}
\newcommand{\eg}{\textit{e.g.}}
\newcommand{\etal}{\textit{et al.}\xspace}
\title{\textbf{Formal Analysis of CRT-RSA Vigilant's Countermeasure Against the BellCoRe Attack}\\
  {\smaller \smaller A Pledge for Formal Methods in the Field of Implementation Security}}
\author{Pablo Rauzy --- Sylvain Guilley\\
Institut Mines-Télécom ; Télécom ParisTech ; CNRS LTCI\\
\{{\it firstname}.{\it lastname}\}@telecom-paristech.fr}
\date{}
\begin{document}

\maketitle

\begin{abstract}
In our paper at PROOFS~2013, we formally studied a few known countermeasures to protect CRT-RSA against the BellCoRe fault injection attack.
However, we left Vigilant's countermeasure and its alleged repaired version by Coron \etal as future work, because the arithmetical framework of our tool was not sufficiently powerful.
In this paper we bridge this gap and then use the same methodology to formally study both versions of the countermeasure.
We obtain surprising results, which we believe demonstrate the importance of formal analysis in the field of implementation security.
Indeed, the original version of Vigilant's countermeasure is actually broken, but not as much as Coron \etal thought it was.
As a consequence, the repaired version they proposed can be simplified.
It can actually be simplified even further as two of the nine modular verifications happen to be unnecessary.
Fortunately, we could formally prove the simplified repaired version to be resistant to the BellCoRe attack, which was considered a ``challenging issue'' by the authors of the countermeasure themselves.
\end{abstract}

\keywords
RSA (\textit{Rivest, Shamir, Adleman}),
CRT (\textit{Chinese Remainder Theorem}),
fault injection,
BellCoRe (\textit{Bell Communications Research}) attack,
formal proof,
OCaml.

\section{Introduction}

Private information protection is a highly demanded feature, especially in the context of global defiance against most infrastructures, assumed to be controlled by governmental agencies.
Properly used cryptography is known to be a key building block for secure information exchange.
However, in addition to the threat of cyber-attacks, implementation-level hacks are also to be considered seriously.
This article deals specifically with the protection of a \emph{decryption} or \emph{signature} crypto-system (called RSA~\cite{DBLP:journals/cacm/RivestSA78}) in the presence of hardware attacks (\eg, we assume the attacker can alter the RSA computation while it is being executed).

It is known since 1997~\cite{boneh-fault} that injecting faults during the computation of CRT-RSA could yield to malformed signatures that expose the prime factors ($p$ and $q$) of the public modulus ($N=p \cdot q$).
Notwithstanding, computing without the fourfold acceleration conveyed by the CRT is definitely not an option in practical applications.
Therefore, many countermeasures have appeared that consist in step-wise internal checks during the CRT computation.
Last year we formally studied some of them~\cite{cryptoeprint:2013:506}.
We were able to formally prove that the unprotected implementation of CRT-RSA as well as the countermeasure proposed by Shamir~\cite{shamir-patent-rsa-crt} are broken, and that Aumüller \etal's countermeasure~\cite{DBLP:conf/ches/AumullerBFHS02} is resistant to the BellCoRe attack.
However, we were not able to study Vigilant's countermeasure~\cite{patent-vigilant_crtrsa,DBLP:conf/ches/Vigilant08} or its repaired version by Coron \etal~\cite{DBLP:conf/fdtc/CoronGMPV10}, because the tool we developed lacked the ability to work with arithmetical properties necessary to handle these countermeasures.
In particular, our framework was unaware of the easiness to compute the discrete logarithm in some composite degree residuosity classes. 
These difficulties were foreseen by Coron \etal themselves in the conclusion of their paper~\cite{DBLP:conf/fdtc/CoronGMPV10}:
\begin{quote}
\emph{``Formal proof of the FA-resistance of Vigilant's scheme including our countermeasures is still an open (and challenging) issue.''}
\end{quote}

This is precisely the purpose of this paper.
The state-of-the-art of formal proofs of Vigilant's countermeasure is the work of Christofi, Chetali, Goubin, and Vigilant~\cite{JCEN-Christofi13}.
However, for tractability reasons, the proof is conducted on reduced versions of the algorithms parameters.
One fault model is chosen authoritatively (\emph{viz}. the zeroization of a complete intermediate data), which is a strong assumption.
In addition, the verification is conducted on a specific implementation in pseudocode, hence concerns about its portability after compilation into machine-level code.

\paragraph{Contributions.}
We improved \finja (our tool based on symbolic computation in the framework of modular arithmetic~\cite{cryptoeprint:2013:506}) to enable the formal study of CRT-RSA Vigilant's countermeasure against the BellCoRe attack.
The \finja tool allows a full fault coverage of CRT-RSA algorithm, thereby keeping the proof valid even if the code is transformed (\eg, optimized, compiled, partitioned in software/hardware, or equipped with dedicated countermeasures).
We show that the original countermeasure~\cite{DBLP:conf/ches/Vigilant08} is indeed broken, but not as much as Coron \etal thought it was when they proposed a \emph{manually} repaired version of it~\cite{DBLP:conf/fdtc/CoronGMPV10}.
We simplify the repaired version accordingly, and formally prove it resistant to the BellCoRe attack under a fault model that considers \emph{permanent faults} (in memory) and \emph{transient faults} (one-time faults, even on copies of the secret key parts, \eg, during their transit on buses or when they reside on register banks), with or without forcing at zero, and with possibly faults at various locations.
Thanks to the formal analysis, we are able to simplify the countermeasure even further: two of the nine checks are unnecessary.

\paragraph{Organization of the paper.}
We recall CRT-RSA cryptosystem and the BellCoRe attack in Sec.~\ref{sec-rsa}.
Vigilant's countermeasure and its variant by Coron \etal are exposed in Sec.~\ref{sec-vigilant}.
In Sec.~\ref{sec-methods} we detail the modular arithmetic framework used by our tool.
The results of our analysis are presented in Sec.~\ref{sec-analysis}.
Finally, conclusions and perspectives are given in Sec.~\ref{sec-concl}.
After that, Appx.~\ref{app-practice} details the practical issues related to fault injection analysis, for single- and multiple-fault attacks.

\section{CRT-RSA and the BellCoRe Attack}
\label{sec-rsa}

This section recaps known results about fault attacks on CRT-RSA (see also \cite{koc_rsa}, \cite[Chap. 3]{Intro_HOST} and \cite[Chap. 7 \& 8]{fabook}).
Its purpose is to settle the notions and the associated notations that will be used in the later sections (that contain novel contributions).

\subsection{RSA}

RSA is both an \emph{encryption} and a \emph{signature} scheme.
It relies on the identity that for all message $0\leq M<N$,
$(M^d)^e \equiv M \mod N$, where $d \equiv e^{-1} \mod \varphi(N)$, by Euler's theorem%
\footnote{We use the usual convention in all mathematical equations, namely that the ``$\text{mod}$'' operator has the lowest binding precedence,
\ie, $a \times b \mod c \times d$ represents the element $a \times b$ in $\Z_{c \times d}$.}.
In this equation, $\varphi$ is Euler's totient function, equal to $\varphi(N)=(p-1) \cdot (q-1)$ when $N=p \cdot q$ is a composite number, product of two primes $p$ and $q$.
For example, if Alice generates the signature $S=M^d \mod N$,
then Bob can verify it by computing $S^e \mod N$, which must be equal to $M$ unless Alice is only pretending to know $d$.
Therefore $(N,d)$ is called the private key, and $(N,e)$ the public key.
In this paper, we are not concerned about the key generation step of RSA,
and simply assume that $d$ is an unknown number in $\llbracket 1, \varphi(N)=(p-1) \cdot (q-1)\llbracket$.
Actually, $d$ can also be chosen equal to the smallest value $e^{-1} \mod \lambda(N)$,
where $\lambda(N) = \frac{(p-1) \cdot (q-1)}{\gcd(p-1, q-1)}$ is the Carmichael function (see PKCS \#1 v2.1, \S 3.1).

\subsection{CRT-RSA}

The computation of $M^d \mod N$ can be speeded-up by a factor four by using the Chinese Remainder Theorem (CRT).
Indeed, figures modulo $p$ and $q$ are twice as short as those modulo $N$.
For example, for $2,048$\,bits RSA, $p$ and $q$ are $1,024$\,bits long.
The CRT-RSA consists in computing $S_p = M^d \mod p$ and $S_q = M^d \mod q$,
which can be recombined into $S$ with a limited overhead.
Due to the little Fermat theorem (special case of the Euler theorem when the modulus is a prime),
$S_p = (M \mod p)^{d \mod (p-1)} \mod p$.
This means that in the computation of $S_p$, the processed data have $1,024$\,bits,
and the exponent itself has $1,024$\,bits (instead of $2,048$\,bits).
Thus the multiplication is four times faster and the exponentiation eight times faster.
However, as there are two such exponentiations (modulo $p$ and $q$), the overall CRT-RSA is roughly speaking four times faster than RSA computed modulo $N$.

This acceleration justifies that CRT-RSA is always used if the factorization of $N$ as $p \cdot q$ is known.
In CRT-RSA, the private key is a more rich structure than simply $(N,d)$:
it is actually comprised of the $5$-tuple $(p, q, d_p, d_q, i_q)$, where:
\begin{itemize}
\item $d_p \doteq d \mod (p-1)$,
\item $d_q \doteq d \mod (q-1)$, and
\item $i_q \doteq q^{-1} \mod p$.
\end{itemize}
The CRT-RSA algorithm is presented in Alg.~\ref{alg-crt-rsa-naive}.
It is straightforward to check that the signature computed at line~\ref{alg-crt-rsa-naive-recombination}
belongs to $\llbracket 0, p \cdot q -1 \rrbracket$.
Consequently, no reduction modulo $N$ is necessary before returning $S$.

\begin{algorithm}
\algorithmConfig
\Input{Message $M$, key $(p, q, d_p, d_q, i_q)$}
\Output{Signature $M^d \mod N$}
\BlankLine
$S_p = M^{d_p} \mod p$ \tcp*[r]{Signature modulo $p$}
$S_q = M^{d_q} \mod q$ \tcp*[r]{Signature modulo $q$}
$S = S_q + q \cdot (i_q \cdot (S_p-S_q) \mod p)$ \tcp*[r]{Recombination} \label{alg-crt-rsa-naive-recombination}
\Return $S$ \;
\caption{Unprotected CRT-RSA}
\label{alg-crt-rsa-naive}
\end{algorithm}

\subsection{The BellCoRe Attack}

In 1997, an dreadful remark has been made by Boneh, DeMillo and Lipton~\cite{boneh-fault}, three staff of BellCoRe:
Alg.~\ref{alg-crt-rsa-naive} could reveal the secret primes $p$ and $q$ if the line 1 or 2 of the computation is faulted, even in a very random way.
The attack can be expressed as the following proposition.
\begin{proposition}[Original BellCoRe attack]
If the intermediate variable $S_p$ (resp. $S_q$) is returned faulted as $\widehat{S_p}$ (resp. $\widehat{S_q}$)\footnote{In other papers related to faults, the faulted variables (such as $X$) are noted either with a star ($X^*$) or a tilde ($\tilde{X}$); in this paper, we use a hat, as it can stretch, hence cover the adequate portion of the variable. For instance, it allows to make an unambiguous difference between a faulted data raised at some power and a fault on a data raised at a given power (contrast $\widehat{X}^e$ with $\widehat{X^e}$).},
then the attacker gets an erroneous signature $\widehat{S}$,
and is able to recover $p$ (resp. $q$) as $\gcd(N, S-\widehat{S})$.
\label{pro-bellcore2faults}
\end{proposition}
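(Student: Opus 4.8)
The plan is to substitute the faulted share into the recombination of Alg.~\ref{alg-crt-rsa-naive}, subtract it from the correct signature, and argue that the resulting difference retains exactly one prime factor of $N$, which the gcd then isolates. Throughout I would work in the forward direction dictated by Proposition~\ref{pro-bellcore2faults}: a fault on $S_p$ should leave a quantity divisible by $p$, and a fault on $S_q$ a quantity divisible by $q$.

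First I would check that the recombination is a sound CRT reconstruction. Invoking the defining relation $i_q \equiv q^{-1} \pmod p$, I would verify that $S = S_q + q\cdot(i_q\cdot(S_p - S_q)\bmod p)$ reduces to the intended values modulo $p$ and modulo $q$, so that in the fault-free case $S$ is indeed the genuine signature $M^d \bmod N$. This step also records, for later use, how each of the two branches controls the signature through its own modulus, which is what makes the effect of a localized fault tractable.

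Next I would inject the fault. Replacing $S_p$ by $\widehat{S_p}$ while leaving $S_q$ intact yields $\widehat S$ through the same formula, and I would then study $D = S - \widehat S$. The goal is to show that exactly one of $p, q$ divides $D$ and, for a generic fault, the other does not, so that $D$ keeps a single prime factor of $N$. Reducing $D$ modulo $p$ and modulo $q$ with the congruences from the first step pins down which factor survives: as asserted in Proposition~\ref{pro-bellcore2faults}, a fault on $S_p$ leaves $D$ divisible by $p$ (and, by the symmetric computation, a fault on $S_q$ leaves $D$ divisible by $q$), the complementary prime being eliminated for all but a negligible family of faults. Finally, since $N = p\cdot q$ with $p \neq q$ both prime, $\gcd(N, D)$ collapses to the unique prime dividing $D$, delivering the secret factor claimed.

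The main obstacle is the genericity clause, which is also the only nonmechanical part of the argument. The gcd returns a proper factor of $N$ only when $D$ is genuinely coprime to the complementary prime; a fault that happens to preserve the affected residue would give $\widehat{S_p}\equiv S_p$ in that component and hence $\gcd(N, D) = N$, which is useless to the attacker. I would therefore argue that such ``silent'' faults form a vanishingly small proportion of all faults, so that the recovery succeeds for essentially any random corruption — precisely the ``even in a very random way'' emphasis of the BellCoRe remark.
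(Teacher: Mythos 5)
Your overall architecture is the same as the paper's proof---verify the recombination congruences, form $D = S - \widehat{S}$, reduce it modulo $p$ and modulo $q$, and use the fact that $\gcd(N,x)$ can only be $1$, $p$, $q$, or $N$---but your key divisibility claim is inverted, and carrying out the very computation you propose would refute it. From $S = S_q + q\cdot(i_q\cdot(S_p - S_q) \bmod p)$ one gets $S \equiv S_p \pmod p$ and $S \equiv S_q \pmod q$. A fault on $S_p$ therefore leaves the residue modulo $q$ \emph{intact} and corrupts the residue modulo $p$: explicitly, $S - \widehat{S} = q\cdot\bigl((i_q\cdot(S_p - S_q) \bmod p) - (i_q\cdot(\widehat{S_p} - S_q) \bmod p)\bigr)$ is manifestly a multiple of $q$ and generically \emph{not} of $p$, so $\gcd(N, S-\widehat{S}) = q$; symmetrically, a fault on $S_q$ makes $D$ a multiple of $p$ (because $(q \bmod p)\cdot i_q \equiv 1 \pmod p$ cancels the perturbation modulo $p$) and the gcd yields $p$. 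This is exactly what the paper's proof establishes: it concludes $\gcd = q$ for a faulted $S_p$ and $\gcd = p$ for a faulted $S_q$. Note that the ``resp.''\ pairing in the proposition's statement is crosswise with respect to the paper's own proof; you took that literal pairing at face value and turned it into the arithmetic claim ``a fault on $S_p$ leaves a quantity divisible by $p$,'' which is false---divisibility by the faulted branch's prime is precisely what the fault destroys, and the surviving common factor with $N$ is the \emph{other} prime. With the two cases swapped, your argument goes through and coincides with the paper's.

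One point in your favor: your genericity caveat is sound and in fact slightly more careful than the paper, which asserts the gcd value outright. The silent case you identify ($\widehat{S_p} \equiv S_p \pmod p$ despite $\widehat{S_p} \neq S_p$) indeed gives $D = 0$, hence no exploitable erroneous signature; since $S_p = M^{d_p} \bmod p$ lies in $\llbracket 0, p-1 \rrbracket$, any fault that keeps the value in range necessarily changes the residue, and for random or zeroing faults the exception is negligible, as you say. The paper leaves this implicit (it likewise states without detail that $S - \widehat{S}$ is not a multiple of $q$ in the $S_q$-fault case), so making it explicit is a legitimate refinement---but it does not repair the inverted divisibility direction, which is the substantive gap.
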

\begin{proof}
For any integer $x$, $\gcd(N, x)$ can only take $4$ values:
\begin{itemize}
\item $1$, if $N$ and $x$ are coprime,
\item $p$, if $x$ is a multiple of $p$,
\item $q$, if $x$ is a multiple of $q$,
\item $N$, if $x$ is a multiple of both $p$ and $q$, \ie, of $N$.
\end{itemize}
In Alg.~\ref{alg-crt-rsa-naive}, if $S_p$ is faulted (\ie, replaced by $\widehat{S_p} \neq S_p$), then\\
$S-\widehat{S} = q \cdot (( i_q \cdot (S_p-S_q)\mod p) - ( i_q \cdot (\widehat{S_p}-S_q) \mod p))$,
and thus $\gcd(N, S-\widehat{S}) = q$.\\
If $S_q$ is faulted (\ie, replaced by $\widehat{S_q} \neq S_q$), then\\
$S-\widehat{S} \equiv (S_q - \widehat{S_q}) - ( q \mod p) \cdot i_q \cdot (S_q - \widehat{S_q}) \equiv 0 \mod p$ because $(q \mod p) \cdot i_q \equiv 1 \mod p$, and thus $S-\widehat{S}$ is a multiple of $p$.
Additionally, $S-\widehat{S}$ is not a multiple of $q$.\\
So, $\gcd(N, S-\widehat{S})=p$.
\end{proof}

\section{Vigilant Countermeasure Against the BellCoRe Attack}
\label{sec-vigilant}

Fault attacks on RSA can be thwarted simply by refraining from implementing the CRT.
If this is not affordable, then the signature can be verified before being outputted.
Such protection is efficient in practice, but is criticized for two reasons.
First of all, it requires an access to $e$;
second, the performances are incurred by the extra exponentiation needed for the verification.
This explains why several other countermeasures have been proposed.

Until recently, none of these countermeasures had been proved.
In 2013, Christofi, Chetali, Goubin, and Vigilant~\cite{JCEN-Christofi13} formally proved an implementation of Vigilant's countermeasure.
However, for tractability purposes, the proof is conducted on reduced versions of the algorithms parameters.
One fault model is chosen authoritatively (the zeroization of a complete intermediate data), which is a strong assumption.
In addition, the verification is conducted on a specific implementation in pseudocode, hence concerns about its portability after compilation into machine-level code.
The same year, we formally studied~\cite{cryptoeprint:2013:506} the ones of Shamir~\cite{shamir-patent-rsa-crt} and Aumüller \etal~\cite{DBLP:conf/ches/AumullerBFHS02} using a tool based on the framework of modular arithmetic, thus offering a full fault coverage on CRT-RSA algorithm, thereby keeping the proof valid even if the code is transformed (\eg, optimized, compiled or partitioned in software/hardware).
We proved the former to be broken and the latter to be resistant to the BellCoRe attack.

\subsection{Vigilant's Original Countermeasure}
\label{sec-vigilant-orig}

In his paper at CHES 2008, Vigilant~\cite{DBLP:conf/ches/Vigilant08} proposed a new method to protect CRT-RSA.
Its principle is to compute $M^d \mod N$ in $\Z_{Nr^2}$ where $r$ is a random integer that is coprime with $N$.
Then $M$ is transformed into $M^*$ such that
\[ M^* \equiv
\begin{cases}
  M \mod N,\\
  1+r \mod r^2,
\end{cases}
\]
which implies that
\[ S^* = M^{*d} \mod Nr^2 \equiv
\begin{cases}
  M^d \mod N,\\
  1+dr \mod r^2.
\end{cases}
\]
The latter results are based on the binomial theorem, which states that:
\begin{align}
(1+r)^d
&= \sum_{k=0}^d {d \choose k}r^k \notag\\
&= 1+dr + {d \choose 2} r^2 + \text{higher powers of $r$},
\end{align}
which simplifies to $1+dr$ in the ring $\Z_{r^2}$.
This property can be seen as a special case of an easy computation of a discrete logarithm in a composite degree residuosity class
(refer for instance to~\cite{DBLP:conf/eurocrypt/Paillier99}).
Therefore the result $S^*$ can be checked for consistency modulo $r^2$.
If the verification $S^* \stackrel{{\smaller ?}}{=} 1+dr \mod r^2$ succeeds, then the final result $S = S^* \mod N$ is returned.
Concerning the random numbers used in the countermeasure, its author recommend using a $32$\,bits integer for $r$, and $64$\,bits for the $R_i$s.
The original algorithm for Vigilant's countermeasure is presented in Alg.~\ref{alg-crt-rsa-vigilant}.

\begin{algorithm}
\algorithmConfig
{\smaller 
\Input{Message $M$, key $(p, q, d_p, d_q, i_q)$}
\Output{Signature $M^d \mod N$, or $\mathsf{error}$}
\BlankLine
Choose random numbers $r$, $R_1$, $R_2$, $R_3$, and $R_4$. \;
\BlankLine
$p' = p r^2$ \;
$M_p = M \mod p'$ \;
$i_{pr} = p^{-1} \mod r^2$ \;
$B_p = p \cdot i_{pr}$ \;
$A_p = 1 - B_p \mod p'$ \;
$M'_p = A_p M_p + B_p \cdot (1 + r) \mod p'$ \;
\BlankLine
\If{$M'_p \not\equiv M \mod p$}{
  \Return$\mathsf{error}$
}
\BlankLine
$d'_p = d_p + R_1 \cdot (p - 1)$ \label{weak-p1} \;
$S_{pr} = {M'_p}^{d'_p} \mod p'$ \;
\BlankLine
\If{$d'_p \not\equiv d_p \mod p - 1$}{
  \Return$\mathsf{error}$
} \label{weak-p2}
\BlankLine
\If{$B_p S_{pr} \not\equiv B_p \cdot (1 + d'_p r) \mod p'$}{
  \Return$\mathsf{error}$
}
\BlankLine
$S'_p = S_{pr} - B_p \cdot (1 + d'_p r - R_3)$ \;
\BlankLine
$q' = q r^2$ \;
$M_q = M \mod q'$ \;
$i_{qr} = q^{-1} \mod r^2$ \;
$B_q = q \cdot i_{qr}$ \;
$A_q = 1 - B_q \mod q'$ \;
$M'_q = A_q M_q + B_q \cdot (1 + r) \mod q'$ \;
\BlankLine
\If{$M'_q \not\equiv M \mod q$}{
  \Return$\mathsf{error}$
}
\BlankLine
\If{$M_p \not\equiv M_q \mod r^2$}{
  \Return$\mathsf{error}$
}
\BlankLine
$d'_q = d_q + R_2 \cdot (q - 1)$ \label{weak-q1} \;
$S_{qr} = {M'_q}^{d'_q} \mod q'$ \;
\BlankLine
\If{$d'_q \not\equiv d_q \mod q - 1$}{
  \Return$\mathsf{error}$
} \label{weak-q2}
\BlankLine
\If{$B_q S_{qr} \not\equiv B_q \cdot (1 + d'_q r) \mod q'$}{
  \Return$\mathsf{error}$
}
\BlankLine
$S'_q = S_{qr} - B_q \cdot (1 + d'_q r - R_4)$ \;
\BlankLine
$S = S'_q + q \cdot (i_q \cdot (S'_p - S'_q) \mod p')$ \;
$N = p q$ \label{weak-pq1} \;
\BlankLine
\If{$N \cdot (S - R_4 - q \cdot i_q \cdot (R_3 - R_4)) \not\equiv 0 \mod N r^2$}{
  \Return$\mathsf{error}$
} \label{weak-pq2}
\BlankLine
\If{$q \cdot i_q \not\equiv 1 \mod p$}{
  \Return$\mathsf{error}$
}
\BlankLine
\Return$S \mod N$ \;
\caption{Vigilant's CRT-RSA}
} 
\label{alg-crt-rsa-vigilant}
\end{algorithm}

\subsection{Coron \etal Repaired Version}
\label{sec-vigilant-coron}

At FDTC 2010, Coron \etal~\cite{DBLP:conf/fdtc/CoronGMPV10} proposed some corrections to Vigilant's original countermeasure.
They claimed to have found three weaknesses in integrity verifications:
\begin{enumerate}
\item \label{weak-p} one for the computation modulo $p-1$;
\item \label{weak-q} one for the computation modulo $q-1$;
\item \label{weak-pq} and one in the final check modulo $Nr^2$.
\end{enumerate}

They propose a fix for each of these weaknesses.
To correct weakness \#\ref{weak-p}, lines \ref{weak-p1} to \ref{weak-p2} in Alg.~\ref{alg-crt-rsa-vigilant} becomes:

{\removelatexerror
\begin{algorithm}[H]
\algorithmConfig
$p_{minusone} = p - 1$ \label{pminusone_1} \;
$d'_p = d_p + R_1 \cdot p_{minusone}$ \;
$S_{pr} = {M'_p}^{d'_p} \mod p'$ \;
$p_{minusone} = p - 1$ \label{pminusone_2} \;
\If{$d'_p \not\equiv d_p \mod p_{minusone}$}{
  \Return$\mathsf{error}$
}
\caption{Vigilant's CRT-RSA: Coron \etal fix \#\ref{weak-p}}
\label{alg-crt-rsa-vigilant-fix-p}
\end{algorithm}
}
\medskip

Similarly, for weakness \#\ref{weak-q}, lines \ref{weak-q1} to \ref{weak-q2} in Alg.~\ref{alg-crt-rsa-vigilant}:

{\removelatexerror
\begin{algorithm}[H]
\algorithmConfig
$q_{minusone} = q - 1$ \;
$d'_q = d_q + R_2 \cdot q_{minusone}$ \;
$S_{qr} = {M'_q}^{d'_q} \mod q'$ \;
$q_{minusone} = q - 1$ \;
\If{$d'_q \not\equiv d_q \mod q_{minusone}$}{
  \Return$\mathsf{error}$
}
\caption{Vigilant's CRT-RSA: Coron \etal fix \#\ref{weak-q}}
\label{alg-crt-rsa-vigilant-fix-q}
\end{algorithm}
}
\smallskip
\noindent
These two fixes immediately looked suspicious to us.
Indeed, Coron \etal suppose that the computations of $p-1$ and $q-1$ are factored as an optimization, while they are not in the original version of Vigilant's countermeasure.
This kind of optimization, called ``common subexpression elimination'', is very classical but is, as for most speed-oriented optimization, not a good idea in security code.
We will see in Sec.~\ref{sec-analysis} that the original algorithm, which did not include this optimization, was not at fault here.
\medskip

The third fix proposed by Coron \etal consists in transforming lines \ref{weak-pq1} to \ref{weak-pq2} of Alg.~\ref{alg-crt-rsa-vigilant} into:

{\removelatexerror
\begin{algorithm}[H]
\algorithmConfig
$N = p \cdot q$ \;
\If{$p \cdot q \cdot (S - R_4 - q \cdot iq \cdot (R_3 - R_4)) \not\equiv 0 \mod N r^2$}{
  \Return$\mathsf{error}$
}
\caption{Vigilant's CRT-RSA: Coron \etal fix \#\ref{weak-pq}}
\label{alg-crt-rsa-vigilant-fix-pq}
\end{algorithm}
}
\smallskip
\noindent
What has changed is the recomputation of $p \cdot q$ instead of reusing $N$ in the verification.
The idea is that if $p$ or $q$ is faulted when $N$ is computed then {\sf error} will be returned.
In the original countermeasure, if $p$ or $q$ were to be faulted, the faulted version of $N$ would appear in the condition twice, simplifying itself out and thus hiding the fault.

\section{Formal Methods}
\label{sec-methods}

Our goal is to prove that the proposed countermeasures work,
\ie, that they deliver a result that does not leak information about neither $p$ nor $q$ (if the implementation is subject to fault injection) exploitable in a BellCoRe attack.
In addition, we wish to reach this goal with the two following assumptions:

\begin{itemize}
\item our proof applies to a very general attacker model, and
\item our proof applies to any implementation that is a (strict) refinement of the abstract algorithm.
\end{itemize}

\subsection{CRT-RSA and Fault Injection}
\label{sec-methods-rsa}

First, we must define what computation is done, and what is our threat model.

\begin{definition}[CRT-RSA]
The CRT-RSA computation takes as input a message $M$,
assumed known by the attacker but which we consider to be random,
and a secret key $(p, q, d_p, d_q, i_q)$.
Then, the implementation is free to instantiate any variable, but must return a result equal to:
$S = S_q + q \cdot (i_q \cdot (S_p - S_q) \mod p)$, where:
\begin{itemize}
\item $S_p = M^{d_p} \mod p$, and
\item $S_q = M^{d_q} \mod q$.
\end{itemize}
\label{def-crtrsa}
\end{definition}

\begin{definition}[fault injection]
An attacker is able to request RSA computations, as per Def.~\ref{def-crtrsa}.
During the computation, the attacker can modify any intermediate value by setting it to either a \emph{random value} or \emph{zero}.
At the end of the computation the attacker can read the result.
\label{def-faultinj}
\end{definition}
Of course, the attacker cannot read the intermediate values used during the computation,
since the secret key and potentially the modulus factors are used.
Such ``whitebox'' attack would be too powerful;
nonetheless, it is very hard in practice for an attacker to be able to access reliably to intermediate variables,
due to protections and noise in the side-channel leakage (\eg, power consumption, electromagnetic emanation).
Remark that our model only takes into account fault injection on data;
the control flow is supposed not to be modifiable.

\begin{remark}
We notice that the fault injection model of Def.~\ref{def-faultinj} corresponds to that of Vigilant~\cite{DBLP:conf/ches/Vigilant08}, with the exception that the conditional tests can also be faulted.
To summarize, an attacker can modify a value in the global memory (\emph{permanent fault}), and modify a value in a local register or bus (\emph{transient fault}),
but cannot inject a permanent fault in the input data (message and secret key), nor modify the control flow graph.
\end{remark}

The independence of the proofs on the algorithm implementation demands that the algorithm is described at a high level.
The two properties that characterize the relevant level are as follows:

\begin{enumerate}
\item The description should be low level enough for the attack to work if protections are not implemented.
\item Any additional intermediate variable that would appear during refinement could be the target of an attack,
but such a fault would propagate to an intermediate variable of the high level description, thereby having the same effect.
\end{enumerate}
From those requirements, we deduce that:
\begin{enumerate}
\item The RSA description must exhibit the computation modulo $p$ and $q$ and the CRT recombination;
typically, a completely blackbox description, where the computations would be realized in one go without intermediate variables, is not conceivable.
\item However, it can remain abstract, especially for the computational parts.
For instance a fault in the implementation of the multiplication (or the exponentiation) is either inoffensive, and we do not need to care about it, or it affects the result of the multiplication (or the exponentiation), and our model takes it into account without going into the details of how the multiplication (or exponentiation) is computed.
\end{enumerate}
In our approach, the protections must thus be considered as an augmentation of the unprotected code,
\ie, a derived version of the code where additional variables are used.
The possibility of an attack on the unprotected code attests that the algorithm is described at the adequate level,
while the impossibility of an attack (to be proven) on the protected code shows that added protections are useful in terms of resistance to attacks.
\begin{remark}
The algorithm only exhibits evidence of safety.
If after a fault injection, the algorithm does not simplify to an error detection, then it might only reveal that some simplification is missing.
However, if it does not claim safety, it produces a \emph{simplified} occurrence of a possible weakness to be investigated further.
\end{remark}

\subsection{How \finja Works}
\label{sec-methods-finja}

Our tool\footnote{\url{http://pablo.rauzy.name/sensi/finja.html}}, \finja, works within the framework of modular arithmetic, which is the mathematical framework of CRT-RSA computations.
The general idea is to represent the computation term as a tree which encodes the computation properties.
Our tool then does \emph{symbolic computation} to simplify the term.
This is done by \emph{term rewriting} it, using rules from arithmetic and the properties encoded in the tree.
Fault injections in the computation term are simulated by changing the properties of a subterm, thus impacting the simplification process.
An attack success condition is also given and used on the term resulting from the simplification to check whether the corresponding attack works on it.
For each possible fault injection, if the attack success condition (which may reference the original term as well as the faulted one) can be simplified to $true$ then the attack is said to work with this fault, otherwise the computation is protected against it.
The outputs of our tool are in HTML form: easily readable reports are produced, which contains all the information about the possible fault injections and their outcome.

\subsubsection{Computation Term}
\label{finja-term}

The computation is expressed in a convenient statement-based input language.
This language's Backus Normal Form is given in Fig.~\ref{bnf}.

\begin{figure}
\begin{minipage}[c]{\linewidth}
\begin{minipage}[c]{0.1\linewidth}
~
\end{minipage}
\begin{minipage}[c]{0.85\linewidth}
{\smaller
\begin{Verbatim}[commandchars=@~&,numbers=left]
term    ::= ( stmt )* 'return' mp_expr ';'
stmt    ::= ( decl | assign | verif ) ';'
decl    ::= 'noprop' mp_var ( ',' mp_var )*   @label~bnf:noprop&
          | 'prime' mp_var ( ',' mp_var )*    @label~bnf:prime&
assign  ::= var ':=' mp_expr                  @label~bnf:assign&
verif   ::= 'if' mp_cond 'abort with' mp_expr @label~bnf:verif&
mp_expr ::= '{' expr '}' | expr               @label~bnf:mpe&
expr    ::= '(' mp_expr ')'
          | '0' | '1' | var                   @label~bnf:expr1&
          | '-' mp_expr
          | mp_expr '+' mp_expr
          | mp_expr '-' mp_expr
          | mp_expr '*' mp_expr
          | mp_expr '^' mp_expr
          | mp_expr 'mod' mp_expr             @label~bnf:expr2&
mp_cond ::= '{' cond '}' | cond               @label~bnf:mpc&
cond    ::= '(' mp_cond ')'
          | mp_expr '=' mp_expr               @label~bnf:cond1&
          | mp_expr '!=' mp_expr
          | mp_expr '=[' mp_expr ']' mp_expr  @label~bnf:eqmod&
          | mp_expr '!=[' mp_expr ']' mp_expr @label~bnf:neqmod&
          | mp_cond '/\' mp_cond
          | mp_cond '\/' mp_cond              @label~bnf:cond2&
mp_var  ::= '{' var '}' | var
var     ::= [a-zA-Z][a-zA-Z0-9_']*            @label~bnf:var&
\end{Verbatim}
}
\caption{\label{bnf} BNF of our tool's input language.}
\end{minipage}
\end{minipage}
\end{figure}

A computation term is defined by a list of statements finished by a {\tt return} statement.
Each statement can either:
\begin{itemize}
\item declare a variable with no properties (line~\ref{bnf:noprop});
\item declare a variable which is a prime number (line~\ref{bnf:prime});
\item declare a variable by assigning it a value (line~\ref{bnf:assign}), in this case the properties of the variable are the properties of the assigned expression;
\item perform a verification (line~\ref{bnf:verif}).
\end{itemize}

\noindent
As can be seen in lines~\ref{bnf:expr1} to~\ref{bnf:expr2}, an expression can be:
\begin{itemize}
\item zero, one, or an already declared variable;
\item the sum (or difference) of two expressions;
\item the product of two expressions;
\item the exponentiation of an expression by another;
\item the modulus of an expression by another.
\end{itemize}

\noindent
The condition in a verification can be (lines~\ref{bnf:cond1} to~\ref{bnf:cond2}):
\begin{itemize}
\item the equality or inequality of two expressions;
\item the equivalence or non-equivalence of two expressions modulo another (lines~\ref{bnf:eqmod} and~\ref{bnf:neqmod});
\item the conjunction or disjunction of two conditions.
\end{itemize}

Optionally, variables (when declared using the {\tt prime} or {\tt noprop} keywords), expressions, and conditions can be protected (lines~\ref{bnf:noprop}, \ref{bnf:prime}, \ref{bnf:mpe}, and~\ref{bnf:mpc}, {\tt mp} stands for ``maybe protected'') from fault injection by surrounding them with curly braces.
This is useful for instance when it is necessary to express the properties of a variable which cannot be faulted in the studied attack model.
For example, in CRT-RSA, the definitions of variables $d_p$, $d_q$, and $i_q$ are protected because they are seen as input of the computation.

Finally, line~\ref{bnf:var} gives the regular expression that variable names must match (they start with a letter and then can contain letters, numbers, underscore, and simple quote).

\medskip

After it is read by our tool, the computation expressed in this input language is transformed into a tree (just like the abstract syntax tree in a compiler).
This tree encodes the arithmetical properties of each of the intermediate variable, and thus its dependencies on previous variables.
Properties of intermediate variables can be everything that is expressible in the input language.
For instance, being null or being the product of other terms (and thus, being a multiple of each of them), are possible properties.

\medskip

An example of usage can be found in Appx.~\ref{app-vigilant-orig}, where the original version of Vigilant's countermeasure is written in this language.
Note the evident similarity with the pseudocode of Alg.~\ref{alg-crt-rsa-vigilant}.

\subsubsection{Fault Injection}
\label{finja-fi}

A fault injection on an intermediate variable is represented by changing the properties of the subterm (a node and its whole subtree in the tree representing the computation term) that represents it.
In the case of a fault which forces at zero, then the whole subterm is replaced by a term which only has the property of being null.
In the case of a randomizing fault, by a term which have no properties.

Our tool simulates \emph{all the possible fault injections} of the attack model it is launched with.
The parameters allow to choose:
\begin{itemize}
\item \emph{how many faults} have to be injected (however, the number of tests to be done is impacted by a factorial growth with this parameter, as is the time needed to finish the computation of the proof);
\item \emph{the type} of each fault (\emph{randomizing} or \emph{zeroing});
\item if \emph{transient} faults are possible or if only \emph{permanent} faults should be performed.
\end{itemize}

\subsubsection{Attack Success Condition}
\label{finja-asc}

The attack success condition is expressed using the same condition language as presented in Sec.~\ref{finja-term}.
It can use any variable introduced in the computation term, plus two special variables {\tt \_} and {\tt @} which are respectively bound to the expression returned by the computation term as given by the user and to the expression returned by the computation with the fault injections.
This success condition is checked for each possible faulted computation term.

\subsubsection{Simplification Process}
\label{finja-reduce}

The simplification is implemented as a recursive traversal of the term tree, based on pattern-matching.
It works just like a naive interpreter would, except it does symbolic computation only, by rewriting the term using rules from arithmetic.
Simplifications are carried out in the $\Z$ ring, and its $\Z_N$ subrings.
The tool knows how to deal with most of the $\Z$ ring axioms:
\begin{itemize}
\item the neutral elements ($0$ for sums, $1$ for products);
\item the absorbing element ($0$, for products);
\item inverses and opposites;
\item associativity and commutativity.
\end{itemize}
However, it does not implement distributivity as it is not confluent. Associativity is implemented by flattening as much as possible (``removing'' all unnecessary parentheses), and commutativity is implemented by applying a stable sorting algorithm on the terms of products or sums.

The tool also knows about most of the properties that are particular to $\Z_N$ subrings and applies them when simplifying a term modulo $N$:
\begin{itemize}
\item identity:
  \begin{itemize}
  \item $(a \mod N) \mod N = a \mod N$,
  \item $N^k \mod N = 0$;
  \end{itemize}
\item inverse:
  \begin{itemize}
  \item $(a \mod N) \times (a^{-1} \mod N) \mod N = 1$,
  \item $(a \mod N) + (-a \mod N) \mod N = 0$;
  \end{itemize}
\item associativity and commutativity:
  \begin{itemize}
  \item $(b \mod N) + (a \mod N) \mod N = a + b \mod N$,
  \item $(a \mod N) \times (b \mod N) \mod N = a \times b \mod N$;
  \end{itemize}
\item subrings: $(a \mod N \times m) \mod N = a \mod N$.
\end{itemize}

In addition to those properties a few theorems are implemented to manage more complicated cases where the properties are not enough when conducting symbolic computations:
\begin{itemize}
\item Fermat's little theorem;
\item its generalization, Euler's theorem;
\item Chinese remainder theorem;
\item a particular case of the binomial theorem (see Sec.~\ref{sec-vigilant-orig}).
\end{itemize}

The tool also implements a lemma that is used for modular (in)equality verifications (note that $a$ or $b$ can be $0$):
$a \times x \stackrel{{\smaller ?}}{\equiv} b \times x \mod N \times x$
is simplified to the equivalent test $a \stackrel{{\smaller ?}}{\equiv} b \mod N$.

\section{Analysis of Vigilant's Countermeasure}
\label{sec-analysis}

This section presents and discusses the results of our formal analysis of Vigilant's countermeasure.

\subsection{Original and Repaired Version}

The original version of Vigilant's countermeasure in our tool's input language can be found in Appx.~\ref{app-vigilant-orig}.

In a single fault attack model, we found that the countermeasure has a single point of failure:
if transient fault are possible, then if $p$ or $q$ are randomized in the computation of $N$ at line~\ref{vigi_orig_attack}, then a BellCoRe attack works.

The repaired version by Coron \etal thus does unnecessary modifications, since it fixes three spotted weaknesses (as seen in Sec.~\ref{sec-vigilant-coron}) while there is only one to fix.
Only implementing the modification of Alg.~\ref{alg-crt-rsa-vigilant-fix-pq} is sufficient to be provably protected against the BellCoRe attack.

After that, we went on to see what would happen in a multiple fault model.
What we found is the object of the next section.

\subsection{Our Fixed and Simplified Version}

Attacking with multiple faults when at least one of the fault is a zeroing fault, means that some verification can be skipped by having their condition zeroed.
We found that two of the tests could be skipped without impacting the protection of the computation (the lines~\ref{vigi_useless_test1} and~\ref{vigi_useless_test2}).
Both tests have been removed from our fixed and simplified version, which can be found in Appx.~\ref{app-vigilant-fixed}.
We have proved that removing any other verification makes the computation vulnerable to single fault injection attacks, thereby proving the \emph{minimality} of the obtained countermeasure.

Vigilant's original fault model did not include the possibility to inject fault in the conditions of the verifications.
If we protect them from fault injection, we found that it is still possible to perform a BellCoRe attack exploiting two faults; Woudenberg \etal~\cite{DBLP:conf/fdtc/WoudenbergWM11} showed that two-fault attacks are realistic.
Both faults must be zeroing faults.
One on $R_3$ (resp. $R_4$), and one on $S'_p$ (resp. $S'_q$).
It works because the presence of $R_3$ (resp. $R_4$) in the last verification is not compensated by its presence in $S'_p$ (resp. $S'_q$).
This shows that even security-oriented optimizations cannot be applied mindlessly in security code,
which once again proves the importance of formal methods.

\subsection{Comparison with Aumüller \etal's Countermeasure}

Now that we have (in Appx.~\ref{app-vigilant-fixed}) a version of Vigilant's countermeasure formally proven resistant to BellCoRe attack, it is interesting to compare it with the only other one in this case, namely Aumüller \etal's countermeasure.
Both countermeasures were developed without the help of formal methods, by trial-and-error engineering, accumulating layers of intermediate computations and verifications to patch weaknesses found at the previous iteration.
This was shown, unsurprisingly, to be an ineffective development method.
On one hand, there is no guarantee that all weaknesses are eliminated, and there was actually a point of failure left in the original version of Vigilant's countermeasure.
On the other hand, there is no guarantee that the countermeasure is minimal.
Vigilant's countermeasure could be simplified by formal analysis, we removed two out of its nine verifications.
This proves that ``visual'' code verification as is general practice in many test/certification labs, even for such a concise algorithm is far from easy in practice.
Moreover, the difficulty of this task is increased by the multiplicity of the fault model (permanent vs. transient, randomizing vs. zeroing, single or double).

Apart from these saddening similarities in the development process, there are notable distinctions between the two countermeasures.
Vigilant's method exposes the secret values $p$ and $q$ at two different places, namely during recombination and reduction from modulo $pqr^2$ to $pq$.
Generally speaking, it is thus less safe than Aumüller \etal's method.
In terms of computational complexity the two countermeasures are roughly equivalent.
Aumüller \etal's one does $4$ exponentiations instead of $2$ for Vigilant's, but the $2$ additional ones are with $16$\,bits numbers, which is entirely negligible compared to the prime factors ($p$ and $q$ are $1,024$\,bits numbers).
Vigilant's countermeasure requires more random numbers, which may be costly but is certainly negligible too compared to the exponentiations.

\section{Conclusions and Perspectives}
\label{sec-concl}

We have formally proven the resistance of a fixed version of Vigilant's CRT-RSA countermeasure against the BellCoRe fault injection attack.
Our research allowed us to remove two out of nine verifications that were useless, thereby simplifying the protected computation of CRT-RSA while keeping it formally proved.
Doing so, we believe that we have shown the importance of formal methods in the field of implementation security.
Not only for the development of trustable devices, but also as an \emph{optimization enabler}, both \emph{for speed and security}.
Indeed, Coron \etal's repaired version of Vigilant's countermeasure included fixes for weaknesses that have been introduced by a hasty speed-oriented optimization, (which is never a good idea in the security field),
and two faults attacks are possible only because of random numbers that have been introduced to reinforce the countermeasure.

As a first perspective, we would like to further improve our tool.
It is currently only able to inject faults in the data.
It would be interesting to be able to take into account faults on instructions such as studied by Heydemann \etal~\cite{cryptoeprint:2013:679}.
It would be interesting as well to be able to automatically refine some properties of the variables, for instance to study attacks using a chosen message as input of the algorithm.
Also, multiple fault analyses can take up to several dozens of minutes to compute. Since each attack is independent, our tool would greatly benefit ($2$ to $8$ times speed-up on any modern multi-core computer) from a parallelization of the computations.
Finally, it is worthwhile to note that our tool is {\it ad hoc} in the sense that we directly inject the necessary knowledge into its core.
It would be interesting to see if general purpose tool such as EasyCrypt~\cite{Barthe:2009:POPL} could be a good fit for this kind of work.

\subsubsection*{Acknowledgements}

The authors wish to thank David Vigilant for his insightful answers to our questions.
We also sincerely acknowledge the SIGPLAN PAC program of the ACM for a financial sponsor.

\bibliographystyle{alpha}
\bibliography{sca}

\appendix
\section{Practical Feasibility of the Identified Attacks}
\label{app-practice}

In this section, we discuss the key features that an attacker must control for his attacks to succeed.
We first describe in Sec.~\ref{sub-crt-rsa-implem} the timing characteristics of a CRT-RSA computation implemented on a smartcard.
Second, we analyze in Sec.~\ref{sub-emi_setup} the performance of a fault injection bench.
Third, we analyse in Sec.~\ref{sub-discussion} which fault injection is realistic or, at the contrary, challenging practically speaking.

\subsection{CRT-RSA Algorithm in a Smartcard}
\label{sub-crt-rsa-implem}

We consider a smartcard implementation of CRT-RSA.
It consists in an arithmetic hardware accelerator, capable of computing multiplications and additions.
It is thus termed a MAC (Multiplier-Accumulator Circuit).
This MAC can manipulate data on $64$\,bits.
In $65$\,nm and lower technologies, such a MAC can run at the frequency of $100$\,MHz
(after logical synthesis in a low power standard cell library).
Let us consider that $N$ is a $2048$\,bits number, \ie, $p$ and $q$ hold on $1024$\,bits.
The computations are carried out in the so-called Montgomery representation.
For instance, when computing a modular multiplication, the multiplication and the reductions are interleaved.
Thus, we derive some approximate timings for the unitary operations.
\begin{itemize}
\item One schoolbook addition on $\Z_p$ (or equivalently on $\Z_q$) of two large numbers takes $1024/64=16$ clock cycles, \ie, $160$\,ns;
\item One schoolbook squaring or multiplication on $\Z_p$ of two large numbers takes $2 \times (1024/64)^2=512$ clock cycles, \ie, $5.12$\,$\mu$s;
The factor two takes into the account the reduction steps.
\item One schoolbook exponentiation on $\Z_p$ of one large number by another large number takes $512 \times (1024 \times \nicefrac32)=786432$ clock cycles, \ie, about $7.8$\,ms;
This estimation assumes the square-and-multiply exponentiation algorithm, in which there are as many squarings as bits in $p$,
but only half (in average) multiplications.
\end{itemize}
Besides, we notice that some operations in Alg.~\ref{alg-crt-rsa-vigilant} are hybrid:
they involve one large number and one small number (\eg, $64$\,bits number, of the size of $r$, $R_1$, $R_2$, $R_3$ or $R_4$).
Obviously, those operations unfold more quickly than operations with two large numbers.
Some operations in Alg.~\ref{alg-crt-rsa-vigilant-fix-p} can even be implemented to be extremely fast.
For instance, the subtraction on line~\ref{pminusone_1} or \ref{pminusone_2} of Alg.~\ref{alg-crt-rsa-vigilant-fix-p} (resp. Alg.~\ref{alg-crt-rsa-vigilant-fix-q}) can be imagined to be done in one clock cycle, \ie, $10$\,ns.
Indeed, as $p$ and $q$ are odd (because they are large prime numbers),
this ``decrementation'' consists simply in resetting the least significant bit of the operand $p$ (resp. $q$).

So, concluding, the operations in the CRT-RSA algorithm can be of diverse durations,
ranging $6$ orders of magnitude,
from $10$\,ns for operations computable in one clock cycle,
to about $8$\,ms for one modular exponentiation.

\subsection{Fault Injection Setup}
\label{sub-emi_setup}

An efficient fault injection method consists in the production of a transient electromagnetic (EM) pulse of high energy in the vicinity of the circuit that runs the CRT-RSA.
This kind of fault is termed EMI (EM Injection).
Some setups can feature the possibility of multiple fault injections.
Simple, double and triple EM pulses are illustrated respectively in Fig.~\ref{img_81160A_pulse_png}, \ref{img_81160A_pulse2O_png} and \ref{img_81160A_pulse3O_png}.

\begin{figure}[h!]
\center
\includegraphics[width=0.65\linewidth] 
  {}
\caption{First-order fault injection attack --- proof of concept}
\label{img_81160A_pulse_png}
\end{figure}
\begin{figure}[h!]
\center
\includegraphics[width=0.65\linewidth] 
  {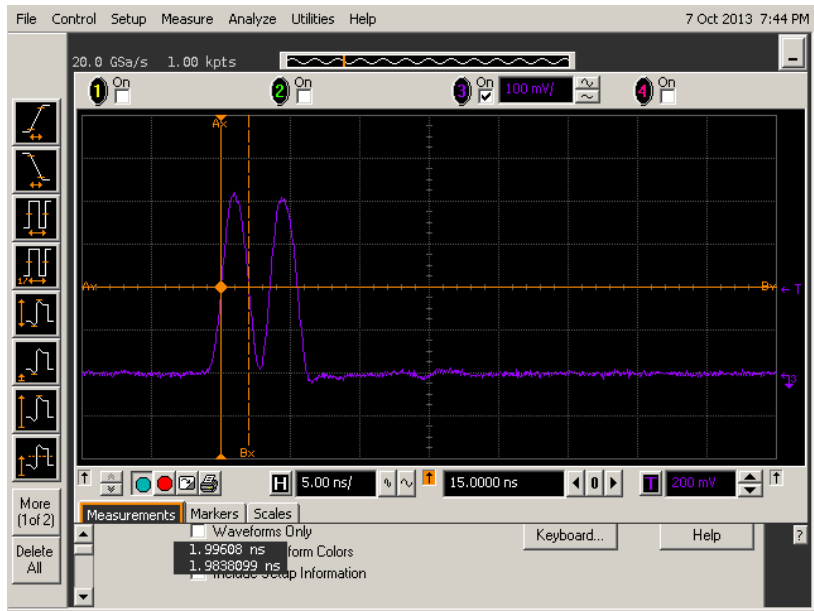}
\caption{Second-order fault injection attack --- proof of concept}
\label{img_81160A_pulse2O_png}
\end{figure}
\begin{figure}[h!]
\center
\includegraphics[width=0.65\linewidth] 
  {}
\caption{Third-order fault injection attack --- proof of concept}
\label{img_81160A_pulse3O_png}
\end{figure}

We detail below the technical characteristics of a fairly high-end bench%
\footnote{We credit Laurent Sauvage for this information,
that comes from the specification of one plug-in of the Smart-SIC Analyzer tool of Secure-IC S.A.S.}:
\begin{itemize}
\item Pulse amplitude: \dotfill $0$---$500$\,mV.
\item Pulse duration:  \dotfill $2$\,ns.
\item Repeatability:   \dotfill $2$\,ns $\iff$ $500$\,MHz.
\end{itemize}

\subsection{Practicality of the Attacks Identified by \textsf{finja}}
\label{sub-discussion}

The \emph{pulse amplitude} (or ``\emph{energy}'') shall be sufficiently high for the EMI to have an effect, and not too large to avoid chip damage.
A typical method to set the adequate level of energy is to run the CRT-RSA computations iteratively, while increasing the energy from zero, with small increments (\eg, a few millivolts).
After a given number of iterations, the CRT-RSA signature will be faulty or the circuit under attack will refuse to give an answer.
From this value on, the EMI starts to have an effect on the chip.
Perhaps this fault injection is probabilistic, meaning that it does not always have an effect.
So, it is safe to increase the energy level of a few percents to raise the injection probability of success as close to one as possible.

The \emph{pulse duration} is an important parameter, because it conditions the accuracy in time of the stress.
As mentioned in Sec.~\ref{sub-crt-rsa-implem}, there is a great variability in the duration of the operations.
Therefore, some attacks will be more difficult to set up successfully.
For exemple, the two corrections (Alg.~\ref{alg-crt-rsa-vigilant-fix-p} and~\ref{alg-crt-rsa-vigilant-fix-pq}) finally repair a flaw that is fairly difficult to fault,
because the operations last:
\begin{itemize}
\item $1$~clock cycle, \ie, $10$\,ns, and
\item $512$~clock cycles, \ie, $5.12$\,$\mu$s respectively.
\end{itemize}

Implementations of CRT-RSA are generally working in constant time, to prevent timing attacks~\cite{kocher-timing_attacks}.
Those attacks are indeed to be taken very seriously, as they can even be launched without a physical access to the device.
Paradoxically, this helps carry out timely attacks on focused stages in the algorithm.
So, with the setup described in Sec.~\ref{sub-emi_setup}, such fault attacks remain possible.
Also, multiple faults can be injected easily if the algorithm timing is deterministic,
because the \emph{repetition rate} of the setup is greater than once per clock cycle.
Indeed, a new pulse can be generated every other period of $2$\,ns, \ie, at a maximal rate of $500$\,MHz.

However, a bench of lower quality would deny some exploits, for instance those that require:
\begin{inparaenum}[(\itshape a\upshape)]
\item to fault an operation of small duration;
\item for high-order fault attacks, operations that are too close one from each other.
\end{inparaenum}


\begin{minipage}[c]{\textwidth} 
\begin{minipage}[c]{0.42\textwidth} 

\section{Vigilant's Original Countermeasure}
\label{app-vigilant-orig}

\begin{minipage}[c]{\linewidth}
\begin{minipage}[c]{0.1\linewidth}
~
\end{minipage}
\begin{minipage}[c]{0.85\linewidth}
{\smaller
\begin{Verbatim}[commandchars=&~|,numbers=left]
noprop error, M, e, r, R1, R2, R3, R4 ;
prime {p}, {q} ;

dp := { e^-1 mod (p-1) } ;
dq := { e^-1 mod (q-1) } ;
iq := { q^-1 mod p } ;

~&color~gray|--- p part ---|
p' := p * r * r ;
Mp := M mod p' ;
ipr := p^-1 mod (r * r) ;
Bp := p * ipr ;
Ap := 1 - Bp mod p' ;
M'p := Ap * Mp + Bp * (1 + r) mod p' ;

if M'p !=[p] M abort with error ;

d'p := dp + R1 * (p - 1) ;
Spr := M'p^d'p mod p' ;

if d'p !=[p - 1] dp abort with error ;

if Bp * Spr !=[p'] Bp * (1 + d'p * r)
  abort with error ;

S'p := Spr - Bp * (1 + d'p * r - R3) ;

~&color~gray|--- q part ---|
q' := q * r * r ;
Mq := M mod q' ;
iqr := q^-1 mod (r * r) ;
Bq := q * iqr ;
Aq := 1 - Bq mod q' ;
M'q := Aq * Mq + Bq * (1 + r) mod q' ;

if M'q !=[q] M abort with error ;

if Mp !=[r * r] Mq abort with error ; &label~vigi_useless_test1|

d'q := dq + R2 * (q - 1) ;
Sqr := M'q^d'q mod q' ;

if d'q !=[q - 1] dq abort with error ;

if Bq * Sqr !=[q'] Bq * (1 + d'q * r)
  abort with error ;

S'q := Sqr - Bq * (1 + d'q * r - R4) ;

~&color~gray|--- recombination ---|
S := S'q + q * (iq * (S'p - S'q) mod p') ;
N := p * q ; &label~vigi_orig_attack|

if N * (S - R4 - q * (iq * (R3 - R4)))
  !=[N * r * r] 0 abort with error ;

if q * iq !=[p] 1 abort with error ;  &label~vigi_useless_test2|

return S mod N ;

%%

_ != @ /\ ( _ =[p] @ \/ _ =[q] @ )
\end{Verbatim}
}
\end{minipage}
\end{minipage}

\end{minipage} 
\begin{minipage}[c]{0.05\textwidth} 
  ~ 
\end{minipage} 
\begin{minipage}[c]{0.42\textwidth} 

\section{Fixed Vigilant's Countermeasure}
\label{app-vigilant-fixed}

\begin{minipage}[c]{\linewidth}
\begin{minipage}[c]{0.1\linewidth}
~
\end{minipage}
\begin{minipage}[c]{0.85\linewidth}
{\smaller
\begin{Verbatim}[commandchars=&~|,numbers=left]
noprop error, M, e, r, R1, R2, R3, R4 ;
prime {p}, {q} ;

dp := { e^-1 mod (p-1) } ;
dq := { e^-1 mod (q-1) } ;
iq := { q^-1 mod p } ;

~&color~gray|--- p part ---|
p' := p * r * r ;
Mp := M mod p' ;
ipr := p^-1 mod (r * r) ;
Bp := p * ipr ;
Ap := 1 - Bp mod p' ;
M'p := Ap * Mp + Bp * (1 + r) mod p' ;

if M'p !=[p] M abort with error ;

d'p := dp + R1 * (p - 1) ;
Spr := M'p^d'p mod p' ;

if d'p !=[p - 1] dp abort with error ;

if Bp * Spr !=[p'] Bp * (1 + d'p * r)
  abort with error ;

S'p := Spr - Bp * (1 + d'p * r - R3) ;

~&color~gray|--- q part ---|
q' := q * r * r ;
Mq := M mod q' ;
iqr := q^-1 mod (r * r) ;
Bq := q * iqr ;
Aq := 1 - Bq mod q' ;
M'q := Aq * Mq + Bq * (1 + r) mod q' ;

if M'q !=[q] M abort with error ;

~&color~gray|-- useless verification removed|

d'q := dq + R2 * (q - 1) ;
Sqr := M'q^d'q mod q' ;

if d'q !=[q - 1] dq abort with error ;

if Bq * Sqr !=[q'] Bq * (1 + d'q * r)
  abort with error ;

S'q := Sqr - Bq * (1 + d'q * r - R4) ;

~&color~gray|--- recombination ---|
S := S'q + q * (iq * (S'p - S'q) mod p') ;
N := p * q ;

if ~&color~blue|p * q| * (S - R4 - q * (iq * (R3 - R4)))
  !=[N * r * r] 0 abort with error ;

~&color~gray|-- useless verification removed|

return S mod N ;

%%

_ != @ /\ ( _ =[p] @ \/ _ =[q] @ )
\end{Verbatim}
}
\end{minipage}
\end{minipage}

\end{minipage} 
\end{minipage} 

\end{document}